%% file: main.tex
\documentclass{article}

\PassOptionsToPackage{numbers,sort&compress}{natbib}
\usepackage[dblblindworkshop,final]{neurips_2026}
\workshoptitle{SaTQuML: Secure and Trustworthy Quantum Machine Learning}

\usepackage[utf8]{inputenc}
\usepackage[T1]{fontenc}

\usepackage{amsmath,amssymb,mathtools,amsthm}
\usepackage{booktabs}
\usepackage{microtype}
\usepackage{xcolor}
\usepackage{graphicx}
\usepackage{url}
\usepackage{hyperref}
\usepackage{pgfplots}
\usepgfplotslibrary{groupplots}
\pgfplotsset{compat=1.18}

\input{generated/results_macros.tex}
\input{generated/endpoint_rows.tex}
\input{generated/contrast_rows.tex}
\input{generated/curve_rows.tex}
\input{generated/positive_results_macros.tex}

\input{generated/positive_endpoint_rows.tex}
\input{generated/positive_contrast_rows.tex}
\input{generated/positive_curve_rows.tex}
\input{generated/sensitivity_control_macros.tex}

\input{generated/matched_control_rows.tex}
\input{generated/classical_control_rows.tex}
\input{generated/gain_transplant_macros.tex}

\input{generated/gain_transplant_rows.tex}
\input{generated/head_refit_macros.tex}

\definecolor{fullblue}{HTML}{275DAD}
\definecolor{ebgred}{HTML}{C33C54}
\definecolor{fgqagreen}{HTML}{2E8B57}
\hypersetup{
  colorlinks=true,linkcolor=fullblue,citecolor=fullblue,urlcolor=fullblue,
  pdftitle={When Normalization Selects the Sign: Auditing Robustness Ablations in Quantum Attention},
  pdfauthor={Owen Friedewald; Srikar Alla; Ali Shiri Sichani; Chi-Ren Shyu}
}
\newcommand{\method}{\textsc{SeqFormer}}
\newcommand{\ebg}{\textsc{EBG}}
\newcommand{\fgqa}{\textsc{FGQA}}
\newcommand{\nach}{\textsc{NACH}}
\newcommand{\Rdiag}{R_{\mathrm{diag}}}
\newtheorem{proposition}{Proposition}

\title{When Normalization Selects the Sign:\\
Auditing Robustness Ablations in Quantum Attention}

\author{%
  \href{https://orcid.org/0009-0003-6847-1706}{Owen Friedewald}\quad
  \href{https://orcid.org/0009-0008-1870-5304}{Srikar Alla}\quad
  \href{https://orcid.org/0000-0002-7239-1341}{Ali Shiri Sichani}\quad
  \href{https://orcid.org/0000-0001-9197-9522}{Chi-Ren Shyu}\\
  Department of Electrical Engineering and Computer Science\\
  University of Missouri, Columbia, MO, USA\\
  \texttt{\{omfvq4,saffw,asp9f,shyuc\}@missouri.edu}
}

\begin{document}
\maketitle

\begin{abstract}
Removing an input-scaling module changes both a classifier and the
perturbations reaching its encoder.  A robustness difference can therefore
reflect the comparison rule as well as the module.  We demonstrate this
problem in a four-qubit quantum-attention detector on generated power-grid
trajectories.  A learned scaling module appears beneficial at a fixed physical
attack budget, but matching an upper bound on perturbations at the encoder
reverses the ordering.  Neither comparison alone establishes a robustness
benefit caused by the module.  The initial test also perturbs clean examples
into attacked examples while retaining their original labels; tests restricted
to already attacked examples do not establish a benefit.  Replacing a trained
model's input scales disrupts detection.  Retraining its linear classification
layer restores the detection rate, but changes individual predictions, leaving
the comparison descriptive rather than causal.  Two further design checks
explain why the input quantum Fisher information regularizer cannot train
this model's query parameters, and why removing confidence bounds does not
establish a larger certified radius.  The evidence is limited to ten seeds,
exact simulation, synthetic data, and a restricted set of attacks; classical
baselines achieve better clean prediction.  The practical lesson is to specify
which perturbation budget is fixed, check that attacks preserve labels and
interventions preserve predictions, and distinguish exploratory controls from
confirmatory evidence.
\end{abstract}

\section{Introduction}

Adversarial robustness is unusually easy to overstate in quantum machine
learning (QML).  A small classifier can appear robust because an attacker was
defined after dimensionality reduction, because a stochastic gradient was not
audited, or because point probabilities were treated as certificates.  These
problems are familiar in classical robustness evaluation
\citep{athalye2018obfuscated,croce2020reliable,madry2018towards}, but QML adds
finite-shot estimation, costly parameter-shift gradients, and ambiguity about
which distance in circuit-input space corresponds to a realizable physical
change.  Meanwhile, quantum classifiers are themselves vulnerable to small
adversarial perturbations \citep{lu2020quantum,liu2020vulnerability}; evidence
of quantum-enhanced robustness remains sensitive to architecture, threat
model, and comparison design \citep{west2023towards}.
The neighboring causal-intervention literature likewise shows that attribution
depends on whether an intervention preserves the behavior being compared and
on which response metric is used \citep{geiger2022inducing,zhang2024towards}.

We study these issues through \method{}, a compact quantum self-attention
classifier for temporal false-data-injection attack (FDIA) detection.
Attacks act on raw measurements through the MATPOWER IEEE 14-bus direct-current (DC) model
\citep{zimmerman2011matpower}, with two controllable state directions and a
feature reducer fitted only on training data.  This keeps the physical intervention explicit throughout
the comparison.

The experiment isolates three study-specific modules, rather than modifying
a published robustness defense.  The encoding bandwidth
governor (\ebg{}) learns bounded feature scales and directly controls the
upper bound on the encoder's sensitivity to input changes.  Fisher-geometric quantum attention
(\fgqa{}) penalizes quantum Fisher information (QFI) inside the mapped attack
subspace while weakly rewarding information outside it.  The noise-aware
certification head (\nach{}) applies simultaneous Clopper--Pearson confidence
bounds before computing a radius diagnostic; in exact statevector simulation we use
this only to check the confidence-bound calculation, not a hardware certificate.  We compare the full model with
paired $-\ebg$, $-\fgqa$, $-\nach$, and all-modules-off conditions over ten
seeds.  A minus sign denotes removal of that module; we call this a knockout.

The central result is a normalization-dependent ablation ordering, not an
identified EBG benefit.  The initial test includes clean and attacked examples;
adding attacks can invalidate the clean examples' labels.  The later test on
already attacked examples does not establish a benefit after correcting for
multiple comparisons.  The FGQA and NACH checks illustrate mathematical
design constraints.  Three elements support this case study:

\begin{enumerate}
  \item two budget normalizations, replacement of one model's gain by
  another's, and refitting of the classification layer that expose the difference between robustness comparison and
  module attribution;
  \item elementary scalar-gain and QFI calculations used as explanatory
  tools and design checks, rather than new general robustness theorems; and
  \item a leakage-safe physical threat model, paired ten-seed evaluation,
  and reproducible evidence with \BundleChecks{} independent checks.
\end{enumerate}

We do \emph{not} claim quantum advantage, measured-grid generalization,
hardware robustness, or coverage of a fully adaptive attack battery.  These
boundaries are part of the result, not qualifications deferred to a supplement.

\section{Background and related work}

\paragraph{Quantum attention.}
Hybrid quantum self-attention has used circuit-derived queries, keys, and
values with Gaussian projected attention in sequence classifiers
\citep{li2024quantum}.  Our four-qubit block follows that broad construction,
but we study security-evaluation and module-attribution questions rather than
claiming an architectural or computational advantage.

\paragraph{Quantum adversarial robustness.}
Adversarial examples extend to quantum classifiers in both classical-input and
quantum-state formulations \citep{lu2020quantum,liu2020vulnerability}.
Quantum hypothesis testing supplies a principled connection between prediction
margins, trace distance, and provable robustness \citep{weber2021optimal}.
Noise-based analyses instead derive guarantees from randomized quantum
rotations and differential privacy \citep{huang2023certified}, while recent
many-body results identify encoding-, locality-, and scrambling-dependent
protection against classical adversaries \citep{dowling2026guarantees}.
Separate empirical work has reported regimes in which quantum models show
favorable adversarial behavior \citep{west2023towards}.  Our question is more
diagnostic: when three putative robustness modules are removed one at a time,
which observed quantities actually move under a fixed physical threat model?
Those formal assumptions do not apply automatically to our generated
physical attack or deterministic simulator.  We use QFI as a local
measure of input sensitivity, drawing on its standard pure-state geometry
\citep{stokes2020quantum}, but do not infer global robustness from QFI alone.

\paragraph{Physical FDIA.}
In linearized power-system state estimation, measurements satisfy
$z=Hx+e$.  An injection $a=Hc$ remains in the model column space and can evade
conventional residual-based bad-data detection \citep{liu2009false}.  Allowing
all of $\operatorname{col}(H)$, however, assumes control of every affected
meter and becomes uninformative after a severe dimensionality reduction.  We
instead declare two controllable state directions before evaluation.  This
creates a limited meter set and, crucially, a nontrivial orthogonal complement
in the four-dimensional circuit-input space.

Data-driven FDIA detectors increasingly exploit temporal structure: recurrent
models have been evaluated on dynamic IEEE test systems
\citep{zhang2022dynamic}.  The closest QML precedent aligns a quantum kernel
for sparse FDIA detection and studies simulator noise and kernel scaling
\citep{huang2026aligning}.  Our study asks a complementary question.  We do not
claim superior detection accuracy; we hold a raw-space physical threat model
fixed and audit whether three proposed robustness modules change clean,
adversarial, or uncertainty-aware outcomes under paired knockouts.

\paragraph{Reliable attack evaluation.}
Iterative first-order attacks are meaningful only when gradients, projections,
and threat-set nesting are correct \citep{madry2018towards}.  Simultaneous perturbation stochastic approximation
(SPSA) \citep{spall1992multivariate} probes a different optimization path, while
expectation over transformation (EOT) is a standard diagnostic for stochastic
defenses \citep{athalye2018obfuscated}.  In deterministic statevector simulation,
EOT gradients equal projected gradient descent (PGD) gradients exactly; we therefore reuse the PGD
candidate and do not count EOT as independent corroboration.

\section{Benchmark and physical threat model}
\label{sec:threat}

\subsection{Generated temporal case-14 data}

For each seed we generate $N=200$ balanced six-frame trajectories.  Let
$H\in\mathbb{R}^{34\times 13}$ contain 14 active-power bus-injection rows and
20 directed branch-flow rows for the standard case-14 topology.  A latent
non-slack angle state follows
\begin{equation}
  x_0\sim\mathcal{N}(0,0.35^2I),\qquad
  x_t=0.98x_{t-1}+\eta_t,\quad
  \eta_t\sim\mathcal{N}(0,0.025^2I),
  \label{eq:trajectory}
\end{equation}
and clean baseline measurements are $z_t^{\mathrm{base}}=Hx_t+e_t$ with
$e_t\sim\mathcal{N}(0,0.01^2I)$.  For attacked trajectories an onset is drawn
uniformly from frames 1 through 4.  A signed two-dimensional target coefficient
has coordinate magnitudes sampled uniformly from $[0.10,0.25]$.  Writing this
generator coefficient as $c_t^{(0)}$ (zero before onset), the observed input is
$z_t=z_t^{\mathrm{base}}+A c_t^{(0)}$.  It reaches the sampled target in one
frame and remains there through frame 5.  These are generated operating
trajectories, not supervisory control and data acquisition (SCADA) measurements.

We stratify entire trajectories into 70\% train and 30\% test partitions.
A four-component principal component analysis (PCA) reducer is fitted only to
all training frames.  To expose temporal
change rather than absolute operating state, every reduced frame is expressed
relative to its trajectory's frame zero, then scaled using training-trajectory
contrasts only and mapped to $[-\pi/2,\pi/2]$ at the observed training extrema.
Test and attacked contrasts are not clipped, so the reducer and
Eq.~\eqref{eq:chain} retain the same linear Jacobian.  A separate data-learnability
check generates 1,000 trajectories for each of ten dataset
seeds under the same split and preprocessing rules.  Temporal-statistics
logistic regression reaches $\TemporalGateMean\pm\TemporalGateSD$ held-out
balanced accuracy, whereas logistic regression on flattened contrasts reaches
$\FlatLogisticGateMean\pm\FlatLogisticGateSD$.
Nonlinear flattened controls also learn: a 16-unit multilayer perceptron (MLP) reaches
$\FlatMLPGateMean\pm\FlatMLPGateSD$ and a random forest reaches
$\FlatForestGateMean\pm\FlatForestGateSD$.  Temporal logistic regression, the MLP, and the random forest exceed
\method{}'s $\FullClean\pm\FullCleanSD$ clean accuracy.  Their larger
1,000-trajectory evaluation is neither parameter- nor attack-matched, so they
establish learnability, not superiority of the summary features or architecture.
The quantum detector is a small methodological example, not a competitive
accuracy benchmark.

\subsection{Limited-state attacker}

The attacker controls zero-based non-slack state columns 2 and 3, corresponding
to MATPOWER buses 4 and 5:
\begin{equation}
 A=H_{:,(2,3)}\in\mathbb{R}^{34\times 2},\qquad
 z^{\mathrm{adv}}_0=z_0,\quad
 z^{\mathrm{adv}}_t=z_t+A\delta_t\;(t=1,\ldots,5),
 \quad \|\delta\|_\infty\le\epsilon .
 \label{eq:attack}
\end{equation}
Exactly 16 measurement rows can change and 18 protected rows are identically
zero in $A$.  Coefficients vary by frame, but frame zero is immutable.  If $g$ is the fitted reducer and
$J_g\in\mathbb{R}^{4\times34}$ its Jacobian for an attacked frame, the orthogonal projector onto the attack directions after reduction is
\begin{equation}
  P_A=\operatorname{proj}\!\left(\operatorname{col}(J_gA)\right),
  \qquad \operatorname{rank}(P_A)=2.
  \label{eq:projector}
\end{equation}
Thus both $P_A$ and $I-P_A$ are nonzero.  For loss $\ell$, white-box gradients
are evaluated through the raw physical map,
\begin{equation}
  \nabla_{\delta_t}\ell=A^\top J_g^\top
  \nabla_{g(z^{\mathrm{adv}}_t)}\ell,
  \label{eq:chain}
\end{equation}
and are checked against finite differences.
Circuit-input derivatives use the exact two-term parameter-shift rule for each
single-generator $R_Y$ occurrence \citep{schuld2019evaluating}; reuploaded
occurrences are summed before applying Eq.~\eqref{eq:chain}.
The evaluation attacker has white-box access to the trained model, fitted
reducer, physical Jacobian, labels, and loss; meter support and coefficient
budget, rather than knowledge, limit its capability.

The planned evaluation adds feasible $A\delta_t$ modifications to both classes
while retaining the original trajectory label.  For an already attacked
trajectory, this is a label-preserving evasion test; for an originally clean
trajectory, a nonzero FDIA can change the example from clean to attacked.  We therefore
call the aggregate curve a \emph{physical stress test}, not a complete
class-conditional security guarantee.  A post hoc \emph{positive-only evaluation} includes only trajectories with
an injected attack before evaluation.  It checks that the additional
perturbation does not cancel that attack.  This preserves the generator's
label, not necessarily an operational definition of a harmful attack.

\section{Model and module hypotheses}
\label{sec:model}

\subsection{Quantum attention block}

Each reduced frame is a four-dimensional token, encoded twice on $Q=4$
qubits by Hadamards,
feature-scaled $R_Y$ rotations, and a CZ ring, following the data-reuploading
motif \citep{perezsalinas2020data}.  Three independent one-layer variational
blocks use $R_Y$--$R_Z$ rotations and a controlled-NOT (CNOT) ring.  Their
expectation values
define scalar queries and keys and a four-vector value:
\begin{equation}
 q_t=\langle Z_0\rangle_{U_q},\quad
 k_t=\langle Z_0\rangle_{U_k},\quad
 v_{t,j}=\langle Z_j\rangle_{U_v}.
\end{equation}
Gaussian attention and a linear classification layer (the head) give
\begin{equation}
 \alpha_{ts}=\frac{\exp[-(q_t-k_s)^2/(2\tau^2)]}
 {\sum_u\exp[-(q_t-k_u)^2/(2\tau^2)]},\quad
 h=\frac{1}{6}\sum_t\sum_s\alpha_{ts}v_s,\quad
 p=\operatorname{softmax}(Wh+b).
 \label{eq:attention}
\end{equation}
The attention width starts at $\tau=0.8$.  The model has 24 variational circuit
parameters and 39 trainable parameters in total (including $W,b,\tau$, and four
input gains).  Before device-specific transpilation, each four-qubit query,
key, or value circuit has depth 18 and gate counts
$\{R_Y:12,H:8,CZ:8,R_Z:4,CX:4\}$; tokens are evaluated as separate circuit
bindings rather than entangled across time.

\subsection{Modules and knockouts}

\paragraph{Encoding bandwidth governor (\ebg{}).}
Each input angle becomes $\gamma_jx_j$ with
$10^{-3}\le\gamma_j\le1$ and $\sum_j\gamma_j\le0.4Q=1.6$.  Projection is
applied after each optimizer update.  The encoder sensitivity bound used for budget normalization is
\begin{equation}
 L_{\mathrm{enc}}=r\sum_{j=1}^{Q}|\gamma_j|,
 \label{eq:lipschitz}
\end{equation}
where $r=2$ reuploads.  The $-\ebg$ knockout fixes $\gamma=\mathbf{1}$,
removes it from the trainable vector, and hence fixes
$L_{\mathrm{enc}}=8$.  Our hypothesis is that limiting input bandwidth reduces
empirical sensitivity; because Eq.~\eqref{eq:lipschitz} also appears in the
radius denominator, radius changes are partly structural and must not be
treated as independent confirmation.

\paragraph{Fisher-geometric regularizer (\fgqa{}).}
For the pure-state QFI $F$ of the query circuit with respect to the mean
of the six input tokens, training adds
\begin{equation}
 \mathcal{L}_{\mathrm{FGQA}}=\lambda_{\mathrm{in}}
 \operatorname{tr}(P_AF)-\lambda_{\mathrm{out}}
 \operatorname{tr}[(I-P_A)F],
 \quad (\lambda_{\mathrm{in}},\lambda_{\mathrm{out}})=(0.01,0.001).
 \label{eq:fgqa}
\end{equation}
The full matrix is computed by central statevector differences with step
$10^{-4}$.  During each SPSA loss evaluation, the penalty is evaluated on the
first trajectory in the sampled minibatch.  It is a training regularizer and
measure of input sensitivity, not a hardware-estimated natural gradient.  The
$-\fgqa$ condition sets both contributions to zero.

\begin{proposition}[Downstream-unitary invariance]
\label{prop:qfi-invariance}
Let $|\psi_{\theta,\eta}(x)\rangle=V_\theta
U_\eta(x)|0\rangle$, where $V_\theta$ is unitary and independent of input $x$.
The pure-state QFI with respect to $x$ is independent of $\theta$ and equals
the input QFI of $U_\eta(x)|0\rangle$.  Consequently, a loss that depends only
on this input QFI has zero gradient with respect to parameters appearing
exclusively in $V_\theta$.
\end{proposition}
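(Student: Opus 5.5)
We will argue directly from the standard pure-state QFI formula. For a differentiable family $|\phi(x)\rangle$ of normalized states, the QFI with respect to $x$ is
\begin{equation*}
 F_{ij}(x)=4\,\mathrm{Re}\!\left[\langle\partial_i\phi|\partial_j\phi\rangle-\langle\partial_i\phi|\phi\rangle\langle\phi|\partial_j\phi\rangle\right]
\end{equation*}
\citep{stokes2020quantum}. This formula uses only inner products of the state and its $x$-derivatives. The plan is to show that every such inner product is unchanged when the state is multiplied by an $x$-independent unitary.

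\textbf{Step 1.} Write $|\phi(x)\rangle=U_\eta(x)|0\rangle$ and $|\psi(x)\rangle=V_\theta|\phi(x)\rangle$. Since $V_\theta$ does not depend on $x$, linearity of differentiation gives
\begin{equation*}
 \partial_i|\psi\rangle=V_\theta\,\partial_i|\phi\rangle .
\end{equation*}
We assume $U_\eta$ is differentiable in $x$, which holds for the $R_Y$ encodings used here.

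\textbf{Step 2.} Because $V_\theta^\dagger V_\theta=I$, we get
\begin{equation*}
 \langle\partial_i\psi|\partial_j\psi\rangle=\langle\partial_i\phi|\partial_j\phi\rangle
 \quad\text{and}\quad
 \langle\partial_i\psi|\psi\rangle=\langle\partial_i\phi|\phi\rangle .
\end{equation*}
Substituting these into the QFI formula shows $F^{\psi}(x)=F^{\phi}(x)$ entrywise, for every $\theta$. This is the first claim.

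A phase-invariant alternative is to note that the projector $|\psi\rangle\langle\psi|$ is the unitary conjugate of $|\phi\rangle\langle\phi|$. The Bures/fidelity metric is invariant under unitary conjugation, since $|\langle\psi(x)|\psi(x')\rangle|=|\langle\phi(x)|\phi(x')\rangle|$, and its second-order expansion in $x'-x$ is the QFI.

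\textbf{Step 3.} For the consequence, let $\mathcal{L}=G(F(x))$, for example Eq.~\eqref{eq:fgqa} with $P_A$ fixed. If $\theta_k$ appears only in $V_\theta$, then $F(x)$ is a function of $(\eta,x)$ alone, so $\mathcal{L}$ is constant in $\theta_k$ and $\partial_{\theta_k}\mathcal{L}=0$. If the QFI is evaluated at a data-dependent point such as the token mean, that point must also not depend on $\theta$; this holds here because the mean is computed from inputs.

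\textbf{Main obstacle.} The algebra is routine, so the step to watch is matching the hypotheses to the paper's actual circuit. The query circuit interleaves encoding and variational layers only if reuploading sits between trainable blocks. The proposition therefore applies to parameters lying strictly downstream of the last $x$-dependent gate, and the remark should note that in \method{} the query variational block follows both encoding passes. A second point: the paper computes $F$ by central finite differences rather than analytically. That estimate is built from fidelities $|\langle\psi(x)|\psi(x+he_i+he_j)\rangle|^2$, which are equally $V_\theta$-invariant, so the discretized penalty is also exactly independent of $\theta$. This extends the claim to the implemented regularizer, not just the idealized one.
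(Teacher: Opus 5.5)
Your proposal is correct and matches the paper's proof: because $V_\theta$ does not depend on $x$, $|\partial_i\psi\rangle=V_\theta|\partial_i\phi\rangle$, and unitarity preserves both $\langle\partial_i\psi|\partial_j\psi\rangle$ and $\langle\partial_i\psi|\psi\rangle$ in the pure-state QFI formula, which is exactly your Steps 1--2. Your Step 3 and the finite-difference remark only spell out what the paper leaves implicit; the paper uses central \emph{statevector} differences rather than fidelities, but those difference quotients also satisfy $D_h|\psi\rangle=V_\theta D_h|\phi\rangle$, so your conclusion about the implemented penalty still holds.
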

\begin{proof}
Writing $|\phi_\eta(x)\rangle=U_\eta(x)|0\rangle$ gives
$|\partial_i\psi\rangle=V_\theta|\partial_i\phi\rangle$.  Unitarity therefore
preserves both $\langle\partial_i\psi|\partial_j\psi\rangle$ and
$\langle\partial_i\psi|\psi\rangle$ in the pure-state QFI formula, eliminating
all dependence on $\theta$.
\end{proof}

Here $\Gamma=\operatorname{diag}(\gamma_1,\ldots,\gamma_Q)$,
$V_\theta=V_q(\theta_q)$, and $U_\eta(x)=U_{\rm enc}(\Gamma x)$.  The penalty can therefore move the
governor $\Gamma$ but is exactly independent of the downstream query angles
$\theta_q$ and therefore cannot train them.  To train a circuit parameter through input QFI,
that parameter must affect the encoding rather than occur only in a subsequent
input-independent unitary.  This follows directly from unitary invariance of
pure-state QFI.  Here, the regularizer can train input scales but not query
angles.

We summarize an attack curve by its area under robust accuracy (AURA),
normalized by the evaluated budget range.

\begin{proposition}[Axis dependence of scalar-gain AURA]
\label{prop:gain-identifiability}
Let a pre-encoder module apply scalar gain $g>0$.  For an arm with gain $g$,
let $R_g(\delta)$ denote robust accuracy as a function of downstream
perturbation displacement $\delta$.  On a common physical range $[0,E]$,
\begin{equation}
 A_{\rm phys}(g;E)=\frac{1}{E}\int_0^E R_g(g\epsilon)d\epsilon
 =\frac{1}{gE}\int_0^{gE}R_g(\delta)d\delta .
 \label{eq:gain-aura}
\end{equation}
If all models share the same response curve, $R_g=R$, and $R$ is nonincreasing, then
$g_1>g_0$ implies $A_{\rm phys}(g_1;E)\le A_{\rm phys}(g_0;E)$, strictly when
$R$ decreases on a set of positive measure within $[0,g_1E]$.
On a common displacement range
$[0,D]$, using $E_g=D/g$ removes this mechanical gain dependence.  If response curves differ, either normalization also reflects
changes in the model's response; one comparison cannot isolate the gain effect.
\end{proposition}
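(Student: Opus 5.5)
The argument has three parts: the change of variables in Eq.~\eqref{eq:gain-aura}, a monotonicity property of running averages of a nonincreasing function, and the observation that fixing a common displacement range removes the gain from the integration limits.

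\textbf{Change of variables.} For the identity in Eq.~\eqref{eq:gain-aura}, I will assume $R_g$ is measurable and bounded, as any accuracy in $[0,1]$ is. A physical perturbation $\epsilon$ passes through the scalar gain and reaches the encoder as displacement $\delta=g\epsilon$. Substituting $\delta=g\epsilon$, with $d\epsilon=d\delta/g$, in $\frac1E\int_0^E R_g(g\epsilon)\,d\epsilon$ gives $\frac{1}{gE}\int_0^{gE}R_g(\delta)\,d\delta$ directly. So $A_{\rm phys}(g;E)$ is the running mean $M(u)=\frac1u\int_0^u R(\delta)\,d\delta$ evaluated at $u=gE$.

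\textbf{Monotonicity under a shared response curve.} When $R_g=R$ for every arm, it suffices to show that $M$ is nonincreasing in $u$ whenever $R$ is nonincreasing. I will avoid differentiating, since $R$ need not be continuous, and use a rescaling instead: $M(u)=\int_0^1 R(us)\,ds$. For $u_1=g_1E>u_0=g_0E$ and each $s\in(0,1]$, we have $R(u_1s)\le R(u_0s)$. Integrating over $s$ gives $A_{\rm phys}(g_1;E)\le A_{\rm phys}(g_0;E)$.

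\textbf{Strictness.} This is the delicate step. Strict inequality requires $R(u_1s)<R(u_0s)$ on a set of $s$ with positive measure. If $R$ decreases on a set of positive measure within $[0,u_1]$, I will read this as $R$ being non-constant on $[0,u_1]$. Then there exists $c\in(0,u_1)$ with $R(c^+)<R(0^+)$. For $s$ near $1$ we have $u_1s$ near $u_1$, but $u_0s$ could also be large, so the points $s\approx 1$ need not witness a drop.

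I would instead compare the two averages directly. Write
\[
M(u_0)-M(u_1)=\Bigl(1-\tfrac{u_0}{u_1}\Bigr)\Bigl(M(u_0)-\tfrac{1}{u_1-u_0}\int_{u_0}^{u_1}R\Bigr).
\]
Since $R$ is nonincreasing, the average of $R$ over $[0,u_0]$ is at least $R(u_0^-)$, and the average over $[u_0,u_1]$ is at most $R(u_0^+)$. Equality in both would force $R$ to be constant on $[0,u_0)$ and on $(u_0,u_1]$ with no jump at $u_0$, hence constant on $[0,u_1]$ up to null sets. That contradicts the hypothesis, so the inequality is strict. I expect pinning down this measure-theoretic reading of ``decreases on a set of positive measure'' to be the main obstacle.

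\textbf{Displacement normalization and differing curves.} On a common displacement range $[0,D]$, set $E_g=D/g$. Then $A_{\rm phys}(g;E_g)=\frac1D\int_0^D R_g(\delta)\,d\delta$, so $g$ no longer enters the limits and any remaining dependence on $g$ runs only through $R_g$.

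For the final clause, if $R_{g_1}\ne R_{g_0}$, the physical-axis difference decomposes as $[M_{R_{g_1}}(g_1E)-M_{R_{g_1}}(g_0E)]+[M_{R_{g_1}}(g_0E)-M_{R_{g_0}}(g_0E)]$. The first bracket is the mechanical gain term and the second is the change in response. The displacement-axis difference equals a response-change term alone, but that term may itself be caused by the gain through training. A simple example suffices here, exhibiting pairs of curves for which the two normalizations disagree in sign. Neither comparison can therefore separate the gain effect from the response change without further assumptions.
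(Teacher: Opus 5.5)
Your proposal is correct and follows the paper's route: the substitution $\delta=g\epsilon$, monotonicity of the running mean of a nonincreasing $R$, and the choice $E_g=D/g$, while filling in what the paper only asserts (the rescaling $M(u)=\int_0^1R(us)\,ds$, the convex-combination identity for strictness, and a decomposition for the final clause). One correction: your strictness argument needs $R$ to be not a.e.\ constant on $[0,u_1]$, and the paper's ``decreases on a set of positive measure'' does guarantee this; your stated reading, ``non-constant on $[0,u_1]$,'' is too weak, because a jump exactly at $0$ or at $u_1$ leaves $M(u_1)=M(u_0)$.
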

\begin{proof}
The first equality is the physical-axis definition and the second substitutes
$\delta=g\epsilon$.  Under $R_g=R$, the average of a nonincreasing function
over $[0,u]$ is nonincreasing in $u$, which gives the physical-axis ordering.
Substituting $E_g=D/g$ yields $D^{-1}\int_0^D R(\delta)d\delta$ for every $g$.
\end{proof}

This change-of-variables argument makes explicit the familiar dependence of
input sensitivity on scale; it is not specific to quantum models.
Proposition~\ref{prop:gain-identifiability} predicts a mechanical advantage
for lower gain and its removal after displacement matching only when the
models share the same response curve.  Neither more examples nor more
seeds repairs a failure of that condition.  EBG uses a diagonal $\Gamma$, not
a scalar $g$, and $L_{\rm enc}$ is an upper-bound normalizer, so the proposition
explains the role of scaling without exactly reparameterizing our detector.
We compare both budget normalizations, then replace the gains to check whether
the model still makes the same predictions without additional perturbations.

\paragraph{Finite-sample radius procedure (\nach{}).}
For $n$ repeated predictions, simultaneous one-sided Clopper--Pearson bounds
\citep{clopper1934use} give $\underline p_A$ and $\overline p_B$ for the two
most frequent classes.  The implementation abstains if
$\underline p_A\le\overline p_B$ and otherwise reports
\begin{equation}
 \Rdiag=\frac{\underline p_A-\overline p_B}{2L_{\mathrm{enc}}}.
 \label{eq:radius}
\end{equation}
The familywise level is $10^{-3}$ with Bonferroni correction across ten
evaluated examples.  The $-\nach$ comparator substitutes point frequencies
for confidence bounds and is marked invalid by construction.  Importantly,
the simulator is deterministic and uses no sampled quantum measurements: repeated calls
return the same prediction.  Equation~\eqref{eq:radius} is therefore a
finite-sample implementation diagnostic in this study, not evidence of a
characterized device-noise channel or an end-to-end hardware certificate.

\section{Experimental protocol}
\label{sec:protocol}

\paragraph{Paired design.}
We use seeds 3000--3009 and five conditions: full, $-\ebg$, $-\fgqa$,
$-\nach$, and all modules off.  Within a seed, data, split, initialization
streams, evaluation subset, and attack streams are deterministically derived
from the same root seed.  Clean accuracy uses all 60 held-out trajectories before additional
evaluation perturbations.  Attacks use a deterministic approximately balanced subset of 20;
the radius diagnostic uses its first ten examples.  Model selection is not
performed on test outcomes.

\paragraph{Training.}
All runs use Qiskit's precision-zero statevector estimator
\citep{javadiabhari2024qiskit}.  Twenty hybrid SPSA steps
\citep{spall1992multivariate} update circuit parameters, temperature, and
governor variables with batch size 8, $a=0.01$, $c=0.05$, and decays
$t^{-0.602}$ and $t^{-0.101}$.  The linear head receives analytic minibatch
gradients at learning rate 0.05.  Before and after SPSA, it is fitted for 500
Adam steps on cached training-only quantum features (learning rate 0.05,
$\ell_2=10^{-4}$).  This deliberately small-scale study is designed for
module attribution rather than scaling claims.

\paragraph{Attacks and normalized accuracy curves.}
At $\epsilon\in\{0,0.0375,0.075,0.1125,0.15\}$ we run ten-step,
zero-initialized sign PGD and projected SPSA in the additional-coefficient tensor
$\delta\in\mathbb{R}^{5\times2}$.  PGD uses step size $2.5\epsilon/10$.  SPSA uses
perturbation $0.05t^{-0.101}$ and update magnitude $0.02t^{-0.602}$.  For every
sample and budget, the candidate with
maximum cross-entropy is selected.  The strongest previously evaluated
smaller-budget candidate is retained because the threat sets are nested.  EOT
is listed in the manifest but is algebraically identical to PGD on this deterministic
simulator and is reused.  We compute AURA by trapezoidal integration
and divide by the evaluated budget range, 0.15 for this test.  Neither optimizer uses randomized restarts.

\paragraph{Checks for gradient masking.}
Uninformative gradients can make attacks appear weaker than they are.
For each seed and condition, we compare PGD with SPSA and random perturbations,
test a larger attack budget, and check that retaining earlier attacks prevents
loss from decreasing as the budget grows.  We check one example per condition;
details and separate optimizer paths are in Appendix~\ref{app:curves}.

\paragraph{Inference.}
We compare each knockout with the full model using differences within each seed.  We report
means and sample standard deviations, 95\% bias-corrected and accelerated
(BCa) bootstrap intervals with 10,000 paired resamples
\citep{efron1987better}, exact two-sided Wilcoxon signed-rank tests
\citep{wilcoxon1945individual}, and Cliff's delta
\citep{cliff1993dominance}.  Tied
nonzero ranks are handled by exhaustive sign permutation rather than a library
``exact'' mode that assumes no ties.  Holm--Bonferroni correction
\citep{holm1979simple} covers all 12 reported condition--metric contrasts.
Invalid point-radius comparisons remain in the multiplicity family but cannot
support a certification claim.

\paragraph{Post hoc positive-only evaluation.}
After inspecting the initial result, we specified a follow-up using the same
20 already attacked test trajectories for every condition within each seed.
We rerun the unchanged attacks and report the positive-only correct-class rate
(the fraction still classified as attacked) and its normalized area.
The comparisons form a separate 12-contrast Holm family.  At each budget,
the maximum absolute combined coefficient across nonreference frames must
exceed $10^{-8}$, checking that the evaluation attack does not cancel the
generator's injection.  This is not evidence of label validity in an
operational grid.

\paragraph{Controls for input scaling.}
We next evaluated all 30 attacked examples per seed using the original
physical budgets and rescaled budgets.  The primary rescaling uses the
mean full-model bound divided by the no-EBG bound,
$3.1349978/8=\MatchedGlobalScale$; a second analysis uses each seed's ratio.
We then tested gain replacement: copy the full model's learned $\Gamma$ into
the paired trained $-\ebg$ model, keeping every other parameter fixed.
The AURA difference decomposes as
$A_N-A_F=(A_N-A_T)+(A_T-A_F)$, where $F$, $N$, and $T$ denote the full,
no-EBG, and gain-replaced models.  The two components form a Holm family.
After gain replacement disrupted detection, we refitted only $(W,b)$ on
the 140 training trajectories (500 Adam steps, learning rate 0.05,
$\ell_2=10^{-4}$), keeping the circuit, temperature, reducer, and gains fixed.
This comparison is descriptive because retraining changes the decision rule.
A 39-parameter classical tanh attention model provides a separate comparison
with matched parameter count and evaluation settings; it received no tuning.

Each follow-up was specified before its own execution, after observing earlier
results.  This limits implementation changes but does not make the sequence
preregistered.  Separate Holm families do not correct for selecting follow-up
questions after seeing earlier results.  These controls investigate the
source of the differences rather than independently confirm a robustness benefit.

\section{Results}
\label{sec:results}

The ordering of the full and no-EBG models reverses when the budget
normalization changes.  We first report the physical stress test, then
compare the rescaled budgets and gain replacements.  The final two
subsections explain the regularizer and radius design checks.

\begin{table}[t]
  \caption{Ten-seed endpoints (mean $\pm$ sample SD).  AURA uses the
  20-trajectory stress-test subset; clean accuracy uses the full test split.
  $^\dagger$Point radii omit confidence bounds and are not certificates.}
  \label{tab:endpoints}
  \centering
  \small
  \setlength{\tabcolsep}{4.5pt}
  \begin{tabular}{lcccc}
    \toprule
    Condition & Clean accuracy & AURA & Mean $\Rdiag$ & Procedure\\
    \midrule
    \EndpointRows
    \bottomrule
  \end{tabular}
\end{table}

\subsection{Changing budget normalization reverses the comparison}
\label{sec:normalization-failure}

\paragraph{Common physical attack budget.}

The full model obtains clean accuracy $\FullClean\pm\FullCleanSD$ and AURA
$\FullAURA\pm\FullAURASD$ (Table~\ref{tab:endpoints}).  Removing \ebg{} changes
clean accuracy by $\EBGCleanDelta$ with BCa interval
$[\EBGCleanCILow,\EBGCleanCIHigh]$ and Holm $p=\EBGCleanHolmP$; clean
performance is therefore not detectably different.  In contrast, $-\ebg$
reduces AURA by $\EBGAURAReduction$ (BCa
$[\EBGAURACILow,\EBGAURACIHigh]$, Holm $p=\EBGAURAHolmP$).
Figure~\ref{fig:robust} shows separation at every positive budget.  At
$\epsilon=0.075$, mean robust accuracy is $\FullAtMid$ for full and
$\NoEBGAtMid$ for $-\ebg$; at $0.15$ it is $\FullAtMax$ and
$\NoEBGAtMax$, respectively.  Both models perform poorly
at the largest budget: the difference is only $0.015$ and neither exceeds
$0.065$ accuracy.  The paired separation is statistically significant for this
stress-test comparison; it is not by itself evidence of an intrinsic
robustness mechanism.

\begin{figure}[t]
  \centering
  \begin{tikzpicture}
    \begin{axis}[
      width=0.77\linewidth,height=0.44\linewidth,
      xlabel={physical coefficient budget $\epsilon$},
      ylabel={accuracy after attack},
      xmin=0,xmax=0.15,ymin=0,ymax=0.9,
      xtick={0,0.0375,0.075,0.1125,0.15},
      xticklabels={0,.0375,.075,.1125,.15},
      ymajorgrids=true,grid style={gray!20},
      legend style={at={(0.98,0.98)},anchor=north east,draw=none,fill=white},
      legend cell align={left},
      every axis plot/.append style={very thick,mark size=2pt},
    ]
      \addplot+[color=fullblue,mark=*] table[x=eps,y=full] {generated/robust_curves.dat};
      \addlegendentry{Full}
      \addplot+[color=ebgred,mark=square*] table[x=eps,y=no_ebg] {generated/robust_curves.dat};
      \addlegendentry{$-\ebg$}
      \addplot+[color=fgqagreen,mark=triangle*,dashed] table[x=eps,y=no_fgqa] {generated/robust_curves.dat};
      \addlegendentry{$-\fgqa$}
    \end{axis}
  \end{tikzpicture}
  \caption{Mean stress-test accuracy over ten paired seeds.  Full
  and $-\fgqa$ coincide at recorded precision.  Curves take the worst loss over
  implemented PGD/SPSA candidates and all previously evaluated smaller-budget
  candidates.  Standard deviations are in Appendix~\ref{app:curves}.}
  \label{fig:robust}
\end{figure}

All \NumRows{} checks for gradient masking pass.  PGD supplies
\PGDWinners{} of \TotalWinners{} selected attacks and SPSA supplies
\SPSAWinners{}.  Retaining earlier attacks makes the reported loss monotone;
separate optimizer paths remain available in Appendix~\ref{app:curves}.

\paragraph{Matching the encoder sensitivity bound.}

The original radius contrast is coupled to Eq.~\eqref{eq:lipschitz}: the
learned governor has mean
$\sum_j\gamma_j=\FullGammaSum\pm\FullGammaSumSD$ and
$L_{\mathrm{enc}}=\FullLipschitz\pm\FullLipschitzSD$, whereas $-\ebg$ fixes
these values to 4 and 8.  At fixed physical $\epsilon$, the no-EBG encoder
has a sensitivity bound about $2.55\times$ larger.  Against
the cap $\sum_j\gamma_j\le1.6$, the learned sum is nearly saturated; at this
scale EBG primarily redistributes an aggregate gain already pinned close to
0.4 per angle rather than learning a substantially smaller bandwidth.

The budget-rescaling control examines this difference (Figure~\ref{fig:matched}).
On all 30 positive examples per seed, full-model AURA is
$\ControlQuantumAURA\pm\ControlQuantumAURASD$.  The unmatched no-EBG-minus-full
difference is $\MatchedUnmatchedDelta$ (exact $p=\MatchedUnmatchedRawP$),
while the global match reverses the ordering:
$\MatchedGlobalDelta$ (BCa [$\MatchedGlobalCILow,\MatchedGlobalCIHigh$],
exact $p=\MatchedGlobalRawP$).  Rescaling by each seed's bound gives a similar result
(Appendix~\ref{app:confound-controls}).  Matching changes the physical threat
set: the no-EBG endpoint of 0.417 uses $\epsilon\approx0.0588$, versus the full
arm's $0.15$, and each AURA is normalized by its own physical range.
A common physical grid therefore compares unequal encoder-displacement bounds;
a matched bound compares unequal physical budgets; neither budget normalization alone isolates
EBG's effect.  Moreover, the global bound does not match realized
sensitivity within the rank-2 attack subspace.  A subspace-restricted operator
norm is a natural further control, not an evaluated normalization here.

The paired gain replacement changes the model's predictions even without
additional perturbations, violating Proposition~\ref{prop:gain-identifiability}'s
shared-response assumption.  Replacing only the trained
no-EBG model's unit gain with full-model $\Gamma$ reduces the unperturbed detection rate from $0.663$ to $0.047$; AURA after replacement is
$\GainTransplantAURA\pm\GainTransplantAURASD$.  The exact identity
$N-F=(N-T)+(T-F)$ still holds, but its large opposing components describe a
change in the decision function (Appendix~\ref{app:confound-controls}).
Because the change is present at $\epsilon=0$, it cannot isolate
robustness due to reduced input scale or show whether adjusting the head
alone would restore detection.

Refitting only the classification head restores the unperturbed detection
rate to $\HeadRefitPositiveClean$, with the circuit, temperature, reducer,
and gains unchanged.  On the same physical budgets, AURA exceeds the original
no-EBG model's by $\HeadRefitNoEBGAURADelta$ (BCa
[$\HeadRefitNoEBGAURACILow,\HeadRefitNoEBGAURACIHigh$]).
This comparison is descriptive: gain and head both change.  Mean prediction
agreement with the original no-EBG model is $\HeadRefitAgreement$ across
the full 60-example test set before additional perturbations.
Appendix~\ref{app:confound-controls} gives the test, effect size, and checks.

An operational reading is that physical state-angle coefficients are the
relevant constraint, making attenuation useful.  The repaired comparison is
consistent with that possibility, but its descriptive exact test
($p=\HeadRefitNoEBGAURARawP$) and the original positive-only test
($p=\PositiveEBGAURARawP$) do not corroborate that deployment claim.
At ten paired seeds, this is limited evidence, not an estimate of zero effect.

\begin{figure}[t]
  \centering
  \begin{tikzpicture}
    \begin{axis}[
      width=0.63\linewidth,height=0.31\linewidth,
      xlabel={reference budget $\epsilon$},
      ylabel={detection rate on\\attacked examples},
      ylabel style={align=center},
      xmin=0,xmax=0.15,ymin=0,ymax=0.75,
      xtick={0,0.0375,0.075,0.1125,0.15},
      xticklabels={0,.0375,.075,.1125,.15},
      grid=major,grid style={gray!18},
      legend style={at={(1.02,0.5)},anchor=west,draw=none,font=\scriptsize},
      legend cell align={left},
      every axis plot/.append style={very thick,mark size=2pt}]
      \addplot+[color=fullblue,mark=*]
        table[x=eps,y=full] {generated/sensitivity_control_curves.dat};
      \addlegendentry{Full, physical}
      \addplot+[color=ebgred,mark=square*]
        table[x=eps,y=no_ebg_unmatched] {generated/sensitivity_control_curves.dat};
      \addlegendentry{$-\ebg$, physical}
      \addplot+[color=fgqagreen,mark=triangle*]
        table[x=eps,y=no_ebg_global_match] {generated/gain_transplant_curves.dat};
      \addlegendentry{$-\ebg$, global $L_{\rm enc}$ match}
      \addplot+[color=black,mark=diamond*,dotted]
        table[x=eps,y=transplant] {generated/gain_transplant_curves.dat};
      \addlegendentry{Gain replacement}
      \addplot+[color=purple!75!black,mark=pentagon*,dashdotted]
        table[x=eps,y=head_refit] {generated/head_refit_curve.dat};
      \addlegendentry{Replacement + head refit}
    \end{axis}
  \end{tikzpicture}
  \caption{Controls on all 30 attacked test examples.  Matched $-\ebg$ uses physical
  budget $\MatchedGlobalScale\epsilon$; other curves use $\epsilon$.
  Gain replacement disrupts detection; refitting restores the
  unperturbed detection rate but changes both gain and classification head.}
  \label{fig:matched}
\end{figure}

\paragraph{Label-preserving check.}

Removing \ebg{} changes positive-only AURA by $\PositiveEBGAURADelta$
(exact $p=\PositiveEBGAURARawP$, Holm $p=\PositiveEBGAURAHolmP$).
The signed-rank test does not reject; a descriptive negative BCa interval does not override
it.  Restricting evaluation to attacked examples therefore does not establish
an EBG benefit or equivalence between the models (Appendix~\ref{app:positive}).

\paragraph{Classical attention comparison.}
The untuned 39-parameter classical model has classical-minus-quantum clean/AURA gaps
$\ClassicalCleanDelta$/$\ClassicalAURADelta$ (exact $p=\ClassicalCleanRawP$/
$\ClassicalAURARawP$); this single untuned comparison does not establish quantum advantage.
The stronger temporal-logistic baseline already outperforms the quantum model
on clean prediction (Appendix~\ref{app:confound-controls}).

\subsection{Design lesson: check which parameters a regularizer can train}

Full and $-\fgqa$ have identical clean accuracy and AURA at recorded precision
for every seed; their radius difference $\FGQARadiusDeltaSci$ misses the
Holm-adjusted test (Holm $p=\FGQARadiusHolmP$).  Proposition~\ref{prop:qfi-invariance}
shows that the penalty can affect only the four input gains.  With those gains
fixed $-\ebg$ equals all-off.  Full and $-\fgqa$ mean gain sums and both QFI
trace terms in Eq.~\eqref{eq:fgqa} differ by less than $0.001$, consistent with the limited effect on input scaling.
The design lesson follows from the QFI calculation before any experiment:
check which parameters the regularizer depends on before interpreting a knockout.  These runs check
the implementation's behavior; they do not establish a general failure of QFI
regularization or exclude metrics on trainable feature maps.

\subsection{Design lesson: preserve the meaning of a radius}

The $-\nach$ point radius, $\NoNACHRadius\pm\NoNACHRadiusSD$, exceeds the full
diagnostic only because confidence margins were deleted; its artifacts set
\texttt{valid\_certificate=false}.  All 100 deterministic simulator calls are identical,
so the confidence term is constant and the radius varies only through
$L_{\rm enc}$.  Both invalid comparisons remain in the original test family.  The larger
number is a predictable consequence of changing the formula, not an empirical
improvement in certified robustness.

\section{Discussion: failure modes and remedies}

\paragraph{What the sign reversal means.}
A physical budget fixes the measurement-model intervention; an
encoder-displacement bound compensates for input scale.  These constrain
different threat sets.  The scalar calculation explains gain dependence;
diagonal gains and a global bound limit its application here.
Refitting the head restores mean unperturbed detection, but not every prediction.
Equal initial accuracy is necessary for the shared response curves in
Proposition~\ref{prop:gain-identifiability}, but establishes neither their
equality nor preservation of the decision function.

\paragraph{Relation to existing work.}
Transformation-aware classical evaluation \citep{athalye2018obfuscated}
likewise includes the input map in the threat model.  Our attacks differentiate
through the gain and reducer; the problem is attribution, not hidden gradients.
Unlike favorable quantum adversarial results
\citep{west2023towards} or FDIA detection with quantum kernels
\citep{huang2026aligning}, this study establishes no competitive quantum defense.
Its study-specific modules do not demonstrate a sign reversal in a published
QML robustness result.  Checking budgets and prediction preservation in an
existing defense is a concrete next test.

\paragraph{Practical evaluation rule.}
For gain-changing ablations, report both budget normalizations and replace
gains with other parameters fixed.  If unperturbed predictions change,
refit only the classification head and report remaining prediction disagreement.
Changing both gain and head gives a descriptive comparison; more samples
cannot make it a causal estimate of the gain's contribution.

\section{Limitations and broader impact}

Four qubits, ten seeds, and 20-example attacks limit generalization.
Exact tests are coarse at this seed count, and the 12-contrast correction
limits power; failure to reject is not equivalence.
The 30-positive follow-ups remain diagnostic sensitivity analyses.  Global
$L_{\rm enc}$ misses rank-2-restricted gain, and the PGD/SPSA evaluation has no
randomized restarts or broader adaptive coverage.  Twenty training SPSA steps
and one-sample FGQA may also undertrain the model.  Stronger attacks could
lower the reported accuracies; passing the implemented checks is not evidence
of worst-case robustness.

Generated linear-DC data omit AC physics, distribution shift, noise, and
calibrated physical budgets.  Classical baselines outperform the quantum
detector on clean prediction, and we make no hardware or deployment claim.
The intended benefit is more reliable evaluation.  The FDIA code is dual use;
the artifact is limited to a public toy topology and generated measurements,
without operational grid data or a control interface.

\paragraph{Conclusion.}

Changing input gain changes how physical perturbations reach a model.
Here, robustness ordering reverses under two normalizations; neither ordering
alone attributes a module's effect.  Gain replacement and head refitting
restore mean detection but change the decision rule.  Evaluations should
report the fixed budget, check labels and unperturbed predictions, and identify
diagnostic follow-ups.  The QFI and radius checks also show why a module's
mathematical role must precede interpretation of its knockout.  This small
synthetic study supports a methodological warning; a predeclared evaluation
of a published defense with stronger attacks is the next step.

\begingroup
\hyphenpenalty=10000
\begin{ack}
The computation for this work was performed on the high performance computing infrastructure provided by Research Support Services at the University of Missouri, Columbia MO. DOI: 10.32469/10355/97710
\end{ack}
\par
\endgroup

\clearpage
\bibliographystyle{plainnat}
\bibliography{references}

\appendix

\section{Complete experimental specification}
\label{app:spec}

\begin{table}[h]
  \caption{Locked production configuration.}
  \centering
  \small
  \begin{tabular}{ll@{\qquad}ll}
    \toprule
    Quantity & Value & Quantity & Value\\
    \midrule
    Root seeds & 3000--3009 & Conditions & 5 paired\\
    Trajectories/seed & 200 & Train/test & 140/60\\
    Positive fraction & 0.5 & Tokens & 6\\
    Qubits/features & 4 & Reuploads & 2\\
    Variational layers & 1 & Entangler & ring\\
    Initial attention $\tau$ & 0.8 & EBG budget & 1.6\\
    Train steps/batch & 20/8 & Optimizer & hybrid SPSA\\
    SPSA $(a,c)$ & $(0.01,0.05)$ & Classical LR & 0.05\\
    Head fit & 500 steps & Head $\ell_2$ & $10^{-4}$\\
    FGQA weights & 0.01/0.001 & Attack steps & 10\\
    Attack subset & 20 & Budgets & 5, max 0.15\\
    Radius subset & 10 & Repetitions & 100\\
    Familywise $\alpha$ & 0.001 & Radius grid & 6, max 0.05\\
    Executor & exact statevector & Shot count & 0\\
    \bottomrule
  \end{tabular}
\end{table}

For the data-learnability check, temporal summary features concatenate per-frame
$\ell_2$ norms and maximum absolute values of both adjacent-frame changes and
changes from frame zero.  Logistic models use 2,000 maximum iterations; the
flattened MLP has one 16-unit hidden layer, L-BFGS, and 2,000 maximum
iterations; the random forest uses 300 trees and minimum leaf size five.  Each
model's random state equals the dataset seed; other scikit-learn settings are
defaults for version 1.7.2.

Production ran Python 3.13.3 with NumPy 2.2.6, SciPy 1.15.3,
scikit-learn 1.7.2, Qiskit 2.5.2, and Qiskit Aer 0.17.2.
Ten independent Slurm array tasks completed in 75:20--78:05 wall time.  Each
script requested two logical CPUs and 32 GB memory, although the exclusive-node
partition allocated and accounted a full 256-CPU node per task.  Production
used three AMD x86-64 nodes, each configured as two 64-core sockets with two
hardware threads per core and approximately 490 GiB physical memory.  Observed
batch maximum resident memory was 115--127 MiB.  Each task ran one Python
process; the accounting record is not evidence of 256-way scaling.
The post hoc positive-only array used the same ten seeds, nodes, two-CPU and
32-GB requests, and one-process execution pattern.  Its tasks completed in
76:30--79:03 with 115--124 MiB observed batch maximum resident memory.
The matched-Lipschitz array used the same request and environment; tasks
completed in 1:38:19--1:41:14 with 113--127 MiB observed maximum resident
memory.  Its longer runtime reflects four complete 30-example curves per seed.
The gain-replacement array used the same ten seeds, software environment,
two-CPU/32-GB request, and one-process pattern; tasks completed in
25:59--27:21 with 112--119 MiB observed maximum resident memory.
The classical control ran all ten seeds serially in one CPU-only Python
process, made no circuit or shot calls, and recorded 1.071 seconds of training
plus 6.240 seconds of attack evaluation in total.  Peak-memory accounting was
not retained for that lightweight run.

\section{Endpoint and contrast details}
\label{app:contrasts}

\begin{table}[h]
  \caption{All paired contrasts.  Delta is condition minus full.  Holm
  correction covers all 12 condition--metric tests.  $^\ast$AURA ordering for
  gain-changing conditions is normalization-dependent; see
  Section~\ref{sec:normalization-failure}.  $^\dagger$The radius comparator is invalid and
  cannot support a certificate claim.}
  \label{tab:contrasts}
  \centering
  \scriptsize
  \setlength{\tabcolsep}{3.2pt}
  \begin{tabular}{llrrrr}
    \toprule
    Knockout & Metric & Mean $\Delta$ & 95\% BCa CI & Holm $p$ & Cliff $\delta$\\
    \midrule
    \ContrastRows
    \bottomrule
  \end{tabular}
\end{table}

Cliff's delta is reported as a distributional effect-size summary; the paired
mean difference and exact signed-rank test are the primary estimand and test.
For tied nonzero absolute differences, all $2^m$ sign assignments are
enumerated.  For example, the $-\ebg$ clean contrast has exhaustive two-sided
$p=124/512=0.2421875$, rather than the nominal 0.25 returned by an ``exact''
implementation that does not support tied ranks.  This correction does not
change any Holm decision.

\section{Robust-curve dispersion and masking audit}
\label{app:curves}

\paragraph{Masking diagnostics.}
One preselected example per seed and condition is checked at
$\epsilon=0.15$: a high-budget PGD diagnostic ($\epsilon=3$, 40 steps) must
fool; SPSA loss must not
exceed PGD loss by $10^{-6}$; the reported nested-budget loss envelope must be
monotone; and PGD loss must exceed a random feasible perturbation.  A failure
blocks a robustness claim for that row.  Independent, non-enveloped PGD paths
are retained to reveal optimizer nonmonotonicity rather than silently
smoothing it.

\begin{table}[h]
  \caption{Accuracy after attack by budget, mean $\pm$ sample SD over ten
  seeds.}
  \centering
  \small
  \begin{tabular}{lccccc}
    \toprule
    Condition & 0 & 0.0375 & 0.075 & 0.1125 & 0.15\\
    \midrule
    \RobustCurveRows
    \bottomrule
  \end{tabular}
\end{table}

The four masking predicates pass in every one of 50 evaluated rows.  The
underlying independent PGD loss paths are monotone in \FullMonotone/10 full,
\NoEBGMonotone/10 $-\ebg$, \NoFGQAMonotone/10 $-\fgqa$,
\NoNACHMonotone/10 $-\nach$, and \NoAllMonotone/10 all-off rows.  The
$-\nach$ paths reuse the identical full-model evaluation, so the distinct
total is \IndependentMonotoneTotal/\DistinctAttackEvaluationRows{}.
The reportable
curve uses the exact nested-set envelope: if $\epsilon_1<\epsilon_2$, every
candidate feasible at $\epsilon_1$ remains feasible at $\epsilon_2$.  This
operation is attack bookkeeping, not post hoc curve smoothing.
Across distinct evaluations, PGD supplies \PGDWinners{} winners, SPSA
\SPSAWinners{}, and reused EOT \EOTWinners{}; \EvaluationReuseRows{} rows are
certificate-only reuse.  The \IndependentNonMonotoneTotal{} independent-path
deviations remain explicit in the audit record.

\section{Post hoc positive-only details}
\label{app:positive}

The rates below use the deterministic 20-example generator-positive subset;
they are not full-test clean accuracy.  The $\epsilon=0$ column is the
unperturbed correct-class rate on that subset.  Radius rows are retained for
complete condition accounting, with point-radius invalidity unchanged.
The rerun contains \PositiveDistinctAttackEvaluationRows{} distinct attack
evaluations and \PositiveEvaluationReuseRows{} certificate-only reuse rows.
Its \PositiveSemanticPasses{} checks that the combined attack is nonzero comprise
\PositiveDistinctSemanticPasses{} distinct checks
and \PositiveSemanticReuseChecks{} reuse checks; the minimum retained signal is
\PositiveMinSignal{}.  Removing EBG lowers AURA in
\PositiveEBGAURALowerSeeds{} seeds, ties in \PositiveEBGAURAEqualSeeds{}, and
raises it in \PositiveEBGAURAHigherSeeds{}.  At $\epsilon=0.075$, full and
$-\ebg$ rates are \PositiveFullAtMid{} and \PositiveNoEBGAtMid{}; at $0.15$
they are \PositiveFullAtMax{} and \PositiveNoEBGAtMax{}.

\begin{table}[h]
  \caption{Positive-only endpoints (mean $\pm$ sample SD).}
  \centering
  \small
  \setlength{\tabcolsep}{5pt}
  \begin{tabular}{lccc}
    \toprule
    Condition & Positive AURA & $\epsilon=0$ rate & $\epsilon=0.15$ rate\\
    \midrule
    \PositiveEndpointRows
    \bottomrule
  \end{tabular}
\end{table}

\begin{table}[h]
  \caption{Positive-only correct-class rate by budget (mean $\pm$ sample
  SD).}
  \centering
  \small
  \setlength{\tabcolsep}{4pt}
  \begin{tabular}{lccccc}
    \toprule
    Condition & 0 & 0.0375 & 0.075 & 0.1125 & 0.15\\
    \midrule
    \PositiveCurveRows
    \bottomrule
  \end{tabular}
\end{table}

\begin{table}[h]
  \caption{Post hoc positive-only contrasts.  Holm correction is separate
  over all 12 rows.  $^\dagger$Point-radius comparator is invalid.}
  \centering
  \scriptsize
  \setlength{\tabcolsep}{3.2pt}
  \begin{tabular}{llrrrr}
    \toprule
    Knockout & Metric & Mean $\Delta$ & 95\% BCa CI & Holm $p$ & Cliff $\delta$\\
    \midrule
    \PositiveContrastRows
    \bottomrule
  \end{tabular}
\end{table}

For the positive-only EBG AURA contrast, Cliff's descriptive effect estimate
is $\PositiveEBGAURACliff$; it does not change the signed-rank decision.

\section{Input-scaling control details}
\label{app:confound-controls}

The matched-$L_{\mathrm{enc}}$ control uses every positive test example, so
each curve ordinate lies on a 1/30 grid.  AURA is normalized by each arm's
actual physical epsilon range; the reference budget aligns the encoder-bound
product, not the raw threat sets.  Global matching was the primary
comparison specified before this run; per-seed matching checks its sensitivity.

\begin{table}[h]
  \caption{Matched-sensitivity endpoints over ten seeds.  The last column is
  the mean maximum physical coefficient budget; matched arms intentionally use
  a smaller raw budget.}
  \centering
  \small
  \begin{tabular}{lccc}
    \toprule
    Arm & Positive AURA & Final rate & Physical $\epsilon_{\max}$\\
    \midrule
    \MatchedControlRows
    \bottomrule
  \end{tabular}
\end{table}

The unmatched, global-match, and paired-match no-EBG-minus-full AURA
differences are $\MatchedUnmatchedDelta$, $\MatchedGlobalDelta$, and
$\MatchedPairedDelta$, respectively.  The corresponding exact p-values are
$\MatchedUnmatchedRawP$, $\MatchedGlobalRawP$, and
$\MatchedPairedRawP$.  The global-match BCa interval is
$[\MatchedGlobalCILow,\MatchedGlobalCIHigh]$ with Cliff's
$\delta=\MatchedGlobalCliff$; the paired-match interval is
$[\MatchedPairedCILow,\MatchedPairedCIHigh]$.
\MatchedControlDecision{} All 40 rows use the identical
complete-positive index set within seed, reproduce accepted model/training
fields, pass masking diagnostics, and preserve all 6,000 combined
generator-plus-evaluation FDIA signals.

The evaluation-only gain replacement uses the same complete-positive subset and
common physical grid.  Table~\ref{tab:gain-transplant} reports its exact paired
decomposition.  The total row is accepted context and is not included in the
new two-component Holm family.  Because gain replacement changes the zero-budget
correct-class rate from $0.663$ to $0.047$, the two components quantify a
change in the decision function, not an attenuation-only causal effect. The components are
$N-T=\GainComponentDelta$ and $T-F=\GainResidualDelta$, both with
Holm $p=\GainComponentHolmP$ and $|\delta|=1$; the table retains their
BCa intervals.

\begin{table}[h]
  \caption{Gain-replacement AURA decomposition over ten paired seeds.
  $N$ is trained no-EBG, $F$ is full, and $T$ inserts only full $\Gamma$ into
  $N$.  The identity $N-F=(N-T)+(T-F)$ reconstructs with zero error.}
  \label{tab:gain-transplant}
  \centering
  \scriptsize
  \setlength{\tabcolsep}{3.2pt}
  \begin{tabular}{lrrrrr}
    \toprule
    Component & Mean $\Delta$ & 95\% BCa CI & Exact $p$ & Holm $p$ & Cliff $\delta$\\
    \midrule
    \GainTransplantRows
    \bottomrule
  \end{tabular}
\end{table}

The head-refit diagnostic holds the circuit and replaced gain fixed and
changes only $(W,b)$.  Its unperturbed detection rate is
$\HeadRefitPositiveClean\pm\HeadRefitPositiveCleanSD$, full-test clean accuracy
is $\HeadRefitTestClean\pm\HeadRefitTestCleanSD$, and AURA is
$\HeadRefitAURA\pm\HeadRefitAURASD$.  Refitting the head changes AURA by
$\HeadRefitFrozenAURADelta$ (BCa
[$\HeadRefitFrozenAURACILow,\HeadRefitFrozenAURACIHigh$]); the control is
descriptive and has no newly adjusted family.  Circuit parameters, temperature,
and $\Gamma$ retain identical byte hashes before and after every refit.
Against the original recipient, the repaired arm matches the mean unperturbed detection
rate (difference $\HeadRefitNoEBGCleanDelta$, BCa
[$\HeadRefitNoEBGCleanCILow,\HeadRefitNoEBGCleanCIHigh$]) and changes AURA by
$\HeadRefitNoEBGAURADelta$ (BCa
[$\HeadRefitNoEBGAURACILow,\HeadRefitNoEBGAURACIHigh$], exact
$p=\HeadRefitNoEBGAURARawP$, Cliff $\delta=\HeadRefitNoEBGAURACliff$).
This comparison matches the mean unperturbed detection rate but changes
both gains and the classification head; it does not isolate the EBG effect.

\begin{table}[h]
  \caption{Untuned 39-parameter classical comparison, mean $\pm$ sample SD over ten
  seeds.  AURA uses all 30 positive held-out trajectories per seed.}
  \label{tab:classical-control}
  \centering
  \small
  \begin{tabular}{lccc}
    \toprule
    Model & Parameters & Clean accuracy & Positive AURA\\
    \midrule
    \ClassicalControlRows
    \bottomrule
  \end{tabular}
\end{table}

\section{Checks of data and implementation}
\label{app:invariants}

Automated tests and the independent bundle audit check the following for every
applicable seed: train/test trajectory identifiers are disjoint; reducer means,
components, and scales are train-only; the reference-frame contrast is exactly
zero; generator-injected differences equal $A c_t^{(0)}$; protected meter changes are
exactly zero; the mapped projector is symmetric, idempotent, and rank two;
analytic physical-space gradients agree with finite differences; attack
coefficients remain in budget; condition switches and parameter counts match
their definitions; confidence methods and validity flags match conditions;
AURA is recomputed from curves; and all statistical summaries are reconstructed
from the one-seed JSON files.

The immutable production archive has SHA-256
\par\smallskip\noindent
\texttt{5c3ad85161dd1370c9bb697bb3d89c9f1497420a5f310cfc60272f01bb5ff3e1}.
\par\smallskip
Production used revision \texttt{\ProductionRevision}; the reporting-only tied
rank correction used \texttt{\ReportingRevision}.  A fresh extraction passes
\BundleChecks{}/\BundleChecks{} independent checks.  The code/evidence archive
is available through the retained Figshare access link:
\url{https://figshare.com/s/d572a5666b8edf739be9}.  The archive contains the
locked implementation, generated evidence summaries, audit procedures, and
paper-source reproduction commands for the reviewed version.  This private
access link is not a permanent publication identifier.

The positive-only execution used revision
\texttt{\PositiveRevision}; reporting used
\texttt{\PositiveReportingRevision}; and its sealed archive has SHA-256
\texttt{\PositiveArchiveSHA}.  Ingestion verifies ten worker manifests, 20
worker source hashes, clean tracked state before and after execution, exact
package versions, all merged-file hashes, the restricted source delta, and an
independent reconstruction of the subset, label-preservation checks, and 12 contrasts.

The matched-Lipschitz control used revision
\texttt{\MatchedControlRevision}; its sealed archive has SHA-256
\par\smallskip\noindent
\texttt{\MatchedArchiveSHA}.
\par\smallskip
Local ingestion verifies the archive and merged
hashes, all ten workers, 30 worker-result hashes, a common seven-file source
inventory, clean tracked state, complete positive subsets, model-field
reproduction, budget grids, 40 masking rows, 6,000 checks that the combined attack is nonzero, and three
independently reconstructed contrasts.  The classical control used revision
\texttt{\ClassicalControlRevision}; its ten rows and source artifacts are
separately checksum-bound and independently audited.

The gain-replacement control used revision
\texttt{\GainTransplantRevision}; its sealed archive has SHA-256
\par\smallskip\noindent
\texttt{\GainTransplantArchiveSHA}.
\par\smallskip
Ingestion verifies the archive and merged hashes, ten clean-tree workers,
30 worker-result hashes, one common eight-file source inventory, the exact
decomposition, \GainTransplantMaskingPasses{} masking rows, and
\GainTransplantSemanticPasses{} checks that the combined attack is nonzero.  The ten array tasks completed
in 25:59--27:21 with recorded peak resident memory below 122~MiB.

The head-refit control used revision \texttt{\HeadRefitRevision}; its sealed
archive has SHA-256
\par\smallskip\noindent
\texttt{\HeadRefitArchiveSHA}.
\par\smallskip
All ten tasks and the dependent merge completed with exit 0.  Task wall times
were 25:37--26:44 and batch peak resident memory was 116,796--126,256~KiB.
Local ingestion verifies the archive and merged hashes, ten clean-tree workers,
20 worker-result hashes, a common nine-file source inventory, unchanged parameters outside the head, changed head hashes, \HeadRefitMaskingPasses{} masking rows, and
\HeadRefitSemanticPasses{} checks that the combined attack is nonzero.

The implementation uses Qiskit 2.5.2 under Apache License 2.0.  MATPOWER
software is distributed under the 3-clause BSD license, while its case files
are expressly outside that software license.  The branch parameters used here
come from the public IEEE 14-bus case converted by MATPOWER from the University
of Washington Power Systems Test Case Archive; we cite the MATPOWER provenance
paper and redistribute only the derived Jacobian construction needed for this
generated benchmark.

\section{Reproduction commands}
\label{app:reproduce}

From the \texttt{seqformer\_pkg} source-tree root, the following commands
recreate the paper assets and perform the focused validation.  The production ablation command is
shown for completeness; its output directory should be changed for a fresh
run.

\begin{verbatim}
python3 -m venv .repro-venv
.repro-venv/bin/python -m pip install -r ../requirements-lock.txt

PYTHONPATH=. .repro-venv/bin/python \
  experiments/audit_temporal_fdia_dataset.py \
  --seeds 10 --n 1000 --qubits 4 --tokens 6 \
  --attack-min 0.10 --attack-max 0.25 --out NEW_GATE.json

PYTHONPATH=. .repro-venv/bin/python \
  experiments/run_ablation.py --scale dev \
  --dataset temporal --evaluation-class all \
  --conditions full,no_ebg,no_fgqa,no_nach,no_all \
  --seeds 10 --seed-start 3000 --n-samples 200 --steps 20 \
  --eval-n 20 --attack-steps 10 --eps-points 5 \
  --attacks pgd,spsa,eot --cert-eval-n 10 --cert-samples 100 \
  --cert-radius-points 6 --masking-samples 1 --out-dir NEW_RUN

PYTHONPATH=. .repro-venv/bin/python \
  experiments/run_ablation.py --scale dev \
  --dataset temporal --evaluation-class positive \
  --conditions full,no_ebg,no_fgqa,no_nach,no_all \
  --seeds 10 --seed-start 3000 --n-samples 200 --steps 20 \
  --eval-n 20 --attack-steps 10 --eps-points 5 \
  --attacks pgd,spsa,eot --cert-eval-n 10 --cert-samples 100 \
  --cert-radius-points 6 --masking-samples 1 --out-dir NEW_POSITIVE

PYTHONPATH=. .repro-venv/bin/python \
  experiments/audit_positive_evasion_results.py \
  ../evidence/p4-dev10-a45b1c9-20260824-a/derived NEW_POSITIVE

PYTHONPATH=. .repro-venv/bin/python \
  experiments/audit_matched_lipschitz_results.py \
  ../evidence/p4-lmatch10-6dd7a14-20260827-a/merged \
  --accepted-positive-dir \
  ../evidence/p4-positive10-eeadd6a-20260825-a \
  --expected-seeds 10

PYTHONPATH=. .repro-venv/bin/python \
  experiments/audit_classical_attention_control.py \
  ../evidence/classical-attention39-1cd1e70-20260827-a

PYTHONPATH=. .repro-venv/bin/python \
  experiments/audit_gain_transplant_results.py \
  ../evidence/p4-gtrans10-0f0b0bd-20260828-a/merged \
  --accepted-matched-dir \
  ../evidence/p4-lmatch10-6dd7a14-20260827-a \
  --expected-seeds 10

PYTHONPATH=. .repro-venv/bin/python \
  experiments/audit_head_refit_transplant_results.py \
  ../evidence/p4-hrefit10-339da07-20260829-a/merged \
  --expected-revision \
  339da0720f42a52f62fa3fc505e10b2791e7eafb \
  --expected-seeds 10

PYTHONPATH=. .repro-venv/bin/python -m pytest -q
.repro-venv/bin/python paper_satquml/generate_assets.py
.repro-venv/bin/python paper_satquml/generate_positive_assets.py \
  --evidence-dir ../evidence/p4-positive10-eeadd6a-20260825-a
PYTHONPATH=. .repro-venv/bin/python \
  paper_satquml/generate_sensitivity_control_assets.py \
  --matched-evidence-dir \
  ../evidence/p4-lmatch10-6dd7a14-20260827-a \
  --classical-evidence-dir \
  ../evidence/classical-attention39-1cd1e70-20260827-a
PYTHONPATH=. .repro-venv/bin/python \
  paper_satquml/generate_gain_transplant_assets.py \
  --gain-evidence-dir \
  ../evidence/p4-gtrans10-0f0b0bd-20260828-a \
  --matched-evidence-dir \
  ../evidence/p4-lmatch10-6dd7a14-20260827-a
PYTHONPATH=. .repro-venv/bin/python \
  paper_satquml/generate_head_refit_assets.py \
  --evidence-dir ../evidence/p4-hrefit10-339da07-20260829-a
.repro-venv/bin/python paper_satquml/build_paper.py
\end{verbatim}

\end{document}

%% file: generated/results_macros.tex
\newcommand{\ProductionRevision}{a45b1c9dea401215d6857517c7aef181a4972929}
\newcommand{\ReportingRevision}{30d7419289533e3fee7efa6f30057b25d2696253}

\newcommand{\NumRows}{50}

\newcommand{\EvaluationReuseRows}{10}
\newcommand{\DistinctAttackEvaluationRows}{40}
\newcommand{\BundleChecks}{1,834}
\newcommand{\TemporalGateMean}{0.792}
\newcommand{\TemporalGateSD}{0.022}
\newcommand{\FlatLogisticGateMean}{0.495}
\newcommand{\FlatLogisticGateSD}{0.031}
\newcommand{\FlatMLPGateMean}{0.770}
\newcommand{\FlatMLPGateSD}{0.042}
\newcommand{\FlatForestGateMean}{0.785}
\newcommand{\FlatForestGateSD}{0.014}
\newcommand{\FullClean}{0.718}
\newcommand{\FullCleanSD}{0.047}
\newcommand{\FullAURA}{0.358}
\newcommand{\FullAURASD}{0.089}

\newcommand{\NoNACHRadius}{0.15956}
\newcommand{\NoNACHRadiusSD}{0.00349}

\newcommand{\FullGammaSum}{1.567}
\newcommand{\FullGammaSumSD}{0.034}
\newcommand{\FullLipschitz}{3.135}
\newcommand{\FullLipschitzSD}{0.067}

\newcommand{\PGDWinners}{3{,}198}
\newcommand{\SPSAWinners}{2}
\newcommand{\EOTWinners}{0}
\newcommand{\TotalWinners}{3{,}200}
\newcommand{\FullMonotone}{9}
\newcommand{\NoEBGMonotone}{9}
\newcommand{\NoFGQAMonotone}{8}
\newcommand{\NoNACHMonotone}{9}
\newcommand{\NoAllMonotone}{9}
\newcommand{\IndependentMonotoneTotal}{35}
\newcommand{\IndependentNonMonotoneTotal}{5}
\newcommand{\EBGCleanDelta}{-0.02167}
\newcommand{\EBGCleanCILow}{-0.04833}
\newcommand{\EBGCleanCIHigh}{0.01500}
\newcommand{\EBGCleanHolmP}{1}

\newcommand{\EBGAURACILow}{-0.07500}
\newcommand{\EBGAURACIHigh}{-0.03000}
\newcommand{\EBGAURAHolmP}{0.035156}

\newcommand{\FGQARadiusHolmP}{0.27344}

\newcommand{\EBGAURAReduction}{0.04938}

\newcommand{\FGQARadiusDeltaSci}{-2.67\times10^{-5}}

\newcommand{\FullAtMid}{0.300}
\newcommand{\FullAtMax}{0.065}
\newcommand{\NoEBGAtMid}{0.225}
\newcommand{\NoEBGAtMax}{0.050}

%% file: generated/endpoint_rows.tex
\newcommand{\EndpointRows}{%
Full & 0.718$\pm$0.047 & 0.358$\pm$0.089 & 0.12947$\pm$0.00283 & CP bounded \\
$-$EBG & 0.697$\pm$0.071 & 0.308$\pm$0.088 & 0.05071$\pm$0.00000 & CP bounded \\
$-$FGQA & 0.718$\pm$0.047 & 0.358$\pm$0.089 & 0.12944$\pm$0.00281 & CP bounded \\
$-$NACH & 0.718$\pm$0.047 & 0.358$\pm$0.089 & 0.15956$\pm$0.00349 & invalid point$^\dagger$ \\
No modules & 0.697$\pm$0.071 & 0.308$\pm$0.088 & 0.06250$\pm$0.00000 & invalid point$^\dagger$ \\
}

%% file: generated/contrast_rows.tex
\newcommand{\ContrastRows}{%
$-$EBG & Clean & -0.02167 & $[-0.04833, 0.01500]$ & 1 & -0.27 \\
 & AURA$^\ast$ & -0.04938 & $[-0.07500, -0.03000]$ & 0.035156 & -0.32 \\
 & Mean $R$ & -0.07875 & $[-0.08095, -0.07741]$ & 0.023438 & -1.00 \\
$-$FGQA & Clean & 0.00000 & $[0.00000, 0.00000]$ & 1 & 0.00 \\
 & AURA$^\ast$ & 0.00000 & $[0.00000, 0.00000]$ & 1 & 0.00 \\
 & Mean $R$ & -0.00003 & $[-0.00005, -0.00001]$ & 0.27344 & -0.05 \\
$-$NACH & Clean & 0.00000 & $[0.00000, 0.00000]$ & 1 & 0.00 \\
 & AURA$^\ast$ & 0.00000 & $[0.00000, 0.00000]$ & 1 & 0.00 \\
 & Mean $R^\dagger$ & 0.03009 & $[0.02978, 0.03060]$ & 0.023438 & 1.00 \\
No modules & Clean & -0.02167 & $[-0.04833, 0.01500]$ & 1 & -0.27 \\
 & AURA$^\ast$ & -0.04938 & $[-0.07500, -0.03000]$ & 0.035156 & -0.32 \\
 & Mean $R^\dagger$ & -0.06697 & $[-0.06916, -0.06562]$ & 0.023438 & -1.00 \\
}

%% file: generated/curve_rows.tex
\newcommand{\RobustCurveRows}{%
Full & $.725\pm.109$ & $.590\pm.145$ & $.300\pm.125$ & $.145\pm.069$ & $.065\pm.053$ \\
$-$EBG & $.715\pm.097$ & $.540\pm.126$ & $.225\pm.132$ & $.085\pm.067$ & $.050\pm.062$ \\
$-$FGQA & $.725\pm.109$ & $.590\pm.145$ & $.300\pm.125$ & $.145\pm.069$ & $.065\pm.053$ \\
$-$NACH & $.725\pm.109$ & $.590\pm.145$ & $.300\pm.125$ & $.145\pm.069$ & $.065\pm.053$ \\
No modules & $.715\pm.097$ & $.540\pm.126$ & $.225\pm.132$ & $.085\pm.067$ & $.050\pm.062$ \\
}

%% file: generated/positive_results_macros.tex
\newcommand{\PositiveRevision}{eeadd6af64d68745240134a704c5e9701fc144c4}
\newcommand{\PositiveReportingRevision}{3c091bec2c3d6ed2d125eea6be2024e006c33da0}
\newcommand{\PositiveArchiveSHA}{2af84a262b6ce1589471e40339be8ee68d8f134f9809774bbd367741040ed763}

\newcommand{\PositiveEvaluationReuseRows}{10}
\newcommand{\PositiveDistinctAttackEvaluationRows}{40}
\newcommand{\PositiveSemanticPasses}{5{,}000/5{,}000}
\newcommand{\PositiveDistinctSemanticPasses}{4{,}000/4{,}000}
\newcommand{\PositiveSemanticReuseChecks}{1{,}000}
\newcommand{\PositiveMinSignal}{0.012225758}
\newcommand{\PositiveEBGAURALowerSeeds}{6}
\newcommand{\PositiveEBGAURAEqualSeeds}{1}
\newcommand{\PositiveEBGAURAHigherSeeds}{3}

\newcommand{\PositiveEBGAURADelta}{-0.04063}

\newcommand{\PositiveEBGAURARawP}{0.12109375}
\newcommand{\PositiveEBGAURAHolmP}{0.96875}
\newcommand{\PositiveEBGAURACliff}{-0.26}

\newcommand{\PositiveFullAtMid}{0.410}

\newcommand{\PositiveFullAtMax}{0.175}

\newcommand{\PositiveNoEBGAtMid}{0.370}

\newcommand{\PositiveNoEBGAtMax}{0.120}

%% file: generated/positive_endpoint_rows.tex
\newcommand{\PositiveEndpointRows}{%
Full & 0.419$\pm$0.079 & 0.655$\pm$0.064 & 0.175$\pm$0.054 \\
$-$EBG & 0.378$\pm$0.082 & 0.685$\pm$0.106 & 0.120$\pm$0.079 \\
$-$FGQA & 0.419$\pm$0.079 & 0.655$\pm$0.064 & 0.175$\pm$0.054 \\
$-$NACH & 0.419$\pm$0.079 & 0.655$\pm$0.064 & 0.175$\pm$0.054 \\
No modules & 0.378$\pm$0.082 & 0.685$\pm$0.106 & 0.120$\pm$0.079 \\
}

%% file: generated/positive_contrast_rows.tex
\newcommand{\PositiveContrastRows}{%
$-$EBG & Clean & -0.02167 & $[-0.04833, 0.01500]$ & 1 & -0.27 \\
 & Positive AURA & -0.04063 & $[-0.07750, -0.00625]$ & 0.96875 & -0.26 \\
 & Mean $R$ & -0.07875 & $[-0.08095, -0.07741]$ & 0.023438 & -1.00 \\
$-$FGQA & Clean & 0.00000 & $[0.00000, 0.00000]$ & 1 & 0.00 \\
 & Positive AURA & 0.00000 & $[0.00000, 0.00000]$ & 1 & 0.00 \\
 & Mean $R$ & -0.00003 & $[-0.00005, -0.00001]$ & 0.35156 & -0.05 \\
$-$NACH & Clean & 0.00000 & $[0.00000, 0.00000]$ & 1 & 0.00 \\
 & Positive AURA & 0.00000 & $[0.00000, 0.00000]$ & 1 & 0.00 \\
 & Mean $R^\dagger$ & 0.03009 & $[0.02978, 0.03060]$ & 0.023438 & 1.00 \\
No modules & Clean & -0.02167 & $[-0.04833, 0.01500]$ & 1 & -0.27 \\
 & Positive AURA & -0.04063 & $[-0.07750, -0.00625]$ & 0.96875 & -0.26 \\
 & Mean $R^\dagger$ & -0.06697 & $[-0.06916, -0.06562]$ & 0.023438 & -1.00 \\
}

%% file: generated/positive_curve_rows.tex
\newcommand{\PositiveCurveRows}{%
Full & $.655\pm.064$ & $.555\pm.107$ & $.410\pm.099$ & $.295\pm.096$ & $.175\pm.054$ \\
$-$EBG & $.685\pm.106$ & $.520\pm.116$ & $.370\pm.123$ & $.220\pm.103$ & $.120\pm.079$ \\
$-$FGQA & $.655\pm.064$ & $.555\pm.107$ & $.410\pm.099$ & $.295\pm.096$ & $.175\pm.054$ \\
$-$NACH & $.655\pm.064$ & $.555\pm.107$ & $.410\pm.099$ & $.295\pm.096$ & $.175\pm.054$ \\
No modules & $.685\pm.106$ & $.520\pm.116$ & $.370\pm.123$ & $.220\pm.103$ & $.120\pm.079$ \\
}

%% file: generated/sensitivity_control_macros.tex
\newcommand{\MatchedControlRevision}{6dd7a14cdc3808bfc19a97c5e96afc30e558b093}
\newcommand{\MatchedArchiveSHA}{632d892d4940470076abca49138f6d21cbc0c131324955ebb8a102ce598153b2}
\newcommand{\ClassicalControlRevision}{1cd1e70db38aa05cf001dca80ed5049470216dc8}
\newcommand{\MatchedControlDecision}{The AURA ordering changes under Lipschitz matching, demonstrating normalization dependence; neither tested axis alone identifies an EBG effect.}
\newcommand{\MatchedGlobalScale}{0.39187472}
\newcommand{\MatchedGlobalDelta}{0.14375}
\newcommand{\MatchedGlobalCILow}{0.11417}
\newcommand{\MatchedGlobalCIHigh}{0.16917}
\newcommand{\MatchedGlobalRawP}{0.001953125}
\newcommand{\MatchedGlobalCliff}{0.92}
\newcommand{\MatchedPairedDelta}{0.14375}
\newcommand{\MatchedPairedCILow}{0.11458}
\newcommand{\MatchedPairedCIHigh}{0.16958}
\newcommand{\MatchedPairedRawP}{0.001953125}
\newcommand{\MatchedUnmatchedDelta}{-0.03667}
\newcommand{\MatchedUnmatchedRawP}{0.099609375}
\newcommand{\ControlQuantumClean}{0.718}
\newcommand{\ControlQuantumCleanSD}{0.047}
\newcommand{\ControlQuantumAURA}{0.395}
\newcommand{\ControlQuantumAURASD}{0.052}
\newcommand{\ControlClassicalClean}{0.527}
\newcommand{\ControlClassicalCleanSD}{0.041}
\newcommand{\ControlClassicalAURA}{0.338}
\newcommand{\ControlClassicalAURASD}{0.061}
\newcommand{\ClassicalCleanDelta}{-0.19167}

\newcommand{\ClassicalCleanRawP}{0.001953125}

\newcommand{\ClassicalAURADelta}{-0.05708}

\newcommand{\ClassicalAURARawP}{0.1796875}

%% file: generated/matched_control_rows.tex
\newcommand{\MatchedControlRows}{%
Full, nominal grid & 0.395$\pm$0.052 & 0.150$\pm$0.045 & 0.1500 \\
$-$EBG, unmatched grid & 0.359$\pm$0.046 & 0.107$\pm$0.062 & 0.1500 \\
$-$EBG, global $L_{\rm enc}$ match & 0.539$\pm$0.070 & 0.417$\pm$0.085 & 0.0588 \\
$-$EBG, paired $L_{\rm enc}$ match & 0.539$\pm$0.070 & 0.417$\pm$0.085 & 0.0588 \\
}

%% file: generated/classical_control_rows.tex
\newcommand{\ClassicalControlRows}{%
Quantum attention & 39 & \ControlQuantumClean$\pm$\ControlQuantumCleanSD & \ControlQuantumAURA$\pm$\ControlQuantumAURASD \\
Classical attention & 39 & \ControlClassicalClean$\pm$\ControlClassicalCleanSD & \ControlClassicalAURA$\pm$\ControlClassicalAURASD \\
}

%% file: generated/gain_transplant_macros.tex
\newcommand{\GainTransplantRevision}{0f0b0bd61f96d305be9132a502638d8f4534fa32}
\newcommand{\GainTransplantArchiveSHA}{cd03a76626839abb830fec1826c45d9504b8159433ca4e04b73cbdb9cd8bfdfe}
\newcommand{\GainTransplantAURA}{0.012}
\newcommand{\GainTransplantAURASD}{0.021}

\newcommand{\GainTransplantMaskingPasses}{10/10}
\newcommand{\GainTransplantSemanticPasses}{1,500/1,500}

\newcommand{\GainComponentDelta}{0.34708}

\newcommand{\GainComponentHolmP}{0.00390625}
\newcommand{\GainResidualDelta}{-0.38375}

%% file: generated/gain_transplant_rows.tex
\newcommand{\GainTransplantRows}{%
Total: $N-F$ (accepted) & -0.03667 & [-0.06833, -0.00625] & 0.099609 & -- & -0.36 \\
Gain replacement: $N-T$ & 0.34708 & [0.32292, 0.37583] & 0.0019531 & 0.0039062 & 1.00 \\
Residual: $T-F$ & -0.38375 & [-0.41958, -0.34792] & 0.0019531 & 0.0039062 & -1.00 \\
}

%% file: generated/head_refit_macros.tex
\newcommand{\HeadRefitRevision}{339da0720f42a52f62fa3fc505e10b2791e7eafb}
\newcommand{\HeadRefitArchiveSHA}{1bc8cac68e5287cb7c21cfc781985e6ab54236ff0ff8451eedf41a2236440856}
\newcommand{\HeadRefitAURA}{0.388}
\newcommand{\HeadRefitAURASD}{0.046}
\newcommand{\HeadRefitPositiveClean}{0.663}
\newcommand{\HeadRefitPositiveCleanSD}{0.078}
\newcommand{\HeadRefitTestClean}{0.733}
\newcommand{\HeadRefitTestCleanSD}{0.062}
\newcommand{\HeadRefitAgreement}{0.890}

\newcommand{\HeadRefitMaskingPasses}{10/10}
\newcommand{\HeadRefitSemanticPasses}{1,500/1,500}
\newcommand{\HeadRefitFrozenAURADelta}{0.37625}
\newcommand{\HeadRefitFrozenAURACILow}{0.34542}
\newcommand{\HeadRefitFrozenAURACIHigh}{0.40792}

\newcommand{\HeadRefitNoEBGAURADelta}{0.02917}
\newcommand{\HeadRefitNoEBGAURACILow}{0.00667}
\newcommand{\HeadRefitNoEBGAURACIHigh}{0.05583}
\newcommand{\HeadRefitNoEBGAURARawP}{0.08203125}
\newcommand{\HeadRefitNoEBGAURACliff}{0.32}

\newcommand{\HeadRefitNoEBGCleanDelta}{-0.00000}
\newcommand{\HeadRefitNoEBGCleanCILow}{-0.05333}
\newcommand{\HeadRefitNoEBGCleanCIHigh}{0.04333}